\definecolor{darkgreen}{rgb}{0.0,0,0.9}
\newtcolorbox{wbox}
{
	colback  = white,
}
\newcommand*{\suppress}[1]{}
\newcommand*{\cM}{\mathcal{M}}
\newcommand*{\cR}{\mathcal{R}}
\newcommand*{\oG}{\overline{G}}
\newcommand*{\oE}{\overline{E}}
\newcommand{\worth}{\mbox{\rm worth}}
\newcommand{\money}{\mbox{\rm money}}
\newcommand{\profit}{\mbox{\rm profit}}
\def\thm@space@setup{%
	\thm@preskip= 10pt
	\thm@postskip=\thm@preskip 
}
\renewcommand{\paragraph}{%
	\@startsection{paragraph}{4}%
	{\z@}{5pt}{-1em}%
	{\normalfont\normalsize\bfseries}%
}
\newtheorem{theorem}{Theorem}
\newtheorem{lemma}{Lemma}
\newtheorem{corollary}{Corollary}
\theoremstyle{definition}
\newtheorem{definition}{Definition}
\newtheorem{remark}{Remark}
\newtheorem{example}{Example}
\newenvironment{fminipage}%
{\begin{Sbox}\begin{minipage}}%
		{\end{minipage}\end{Sbox}\fbox{\TheSbox}}
\newcommand{\CM}{\mbox{${\cal M}$}}
\newcommand\QQ{\boldsymbol{\mathit{Q}}}
\newcommand{\cost}{\mbox{\rm cost}}
\newcommand\ZZ{\boldsymbol{\mathit{Z}}}
\title{The Investment Management Game: \\
Extending the Scope of the Notion of Core}
\author[1]{Vijay V.~Vazirani\footnote{Supported in part by NSF grant CCF-2230414.}}
\affil[1]{University of California, Irvine}
\date{}
\begin{document}
	\maketitle

\begin{abstract}
The core is a dominant solution concept in economics and cooperative game theory; it is predominantly used for profit — equivalently, cost or utility — sharing. This paper demonstrates the versatility of this notion by proposing a completely different use: in a so-called investment management game, which is a {\em game against nature} rather than a cooperative game. This game has only one agent whose strategy set is all possible ways of distributing her money among investment firms. 

The agent wants to pick a strategy such that in each of exponentially many future ``scenarios'', sufficient money is available in the “right” firms so she can buy an “optimal investment” for that scenario. Such a strategy constitutes a core imputation under a broad interpretation, though traditional formal framework, of the core.

Our game is defined on {\em perfect graphs}, since the maximum stable set problem can be solved in polynomial time for such graphs. We completely characterize the core of this game, analogous to Shapley and Shubik’s characterization of the core of the assignment game. A key difference is the following technical novelty: whereas their characterization follows from total unimodularity, ours follows from {\em total dual integrality}.
\end{abstract}

\bigskip
\bigskip
\bigskip
\bigskip
\bigskip
\bigskip
\bigskip
\bigskip
\bigskip
\bigskip
\bigskip
\bigskip
\bigskip
\bigskip
\bigskip
\bigskip
\bigskip
\bigskip
\bigskip
\bigskip
\bigskip
\bigskip

\pagebreak

\section{Introduction}
\label{sec.intro}

The core is a dominant solution concept in economics and game theory. Its origins lie in the 19th century book of Edgeworth \cite{Edgeworth1881mathematical} in which it was referred to as the {\em contract curve} and was first used in general equilibrium theory. In 1959, Gillies \cite{Gillies-Core} gave an improved definition which started being used in cooperative game theory for ``fair'' profit sharing. Since then, the stature of this solution concept grew considerably and today it is considered the gold standard for profit sharing and much more desirable than other notions, e.g., least core and nucleolus\footnote{Some drawbacks of these two notions are mentioned in \cite{Va.general}.}.

The following question arises, ``How versatile is this solution concept?'' Let us first consider this question in the context of Nash equilibrium, which is perhaps the most important solution concept in game theory. Even for the special case of bimatrix games, it provides deep insights in a rich milieu of situations, each having its own special character, e.g., Prisoner's Dilemma, Matching Pennies,  Battle of the Sexes, and Rock-Paper-Scissor; the last game is in fact zero-sum. Indeed, most ``big'' solution concepts tend to be similarly multi-faceted. In contrast, within game theory, the core has been used for profit ---  equivalently, cost or utility --- sharing only. 

The purpose of this paper is to show that the notion of core is more versatile than previously believed. For this, we need to first pick an appropriate definition of core, since adopting a narrow definition would be self-defeating; the latter says that this notion is meant for sharing profit in such a way that stability is ensured. Under this definition, a core imputation shares the profit of the grand coalition among agents so that no sub-coalition can make more profit by itself and therefore will not succeed from the grand coalition. Clearly, we need to interpret core in more general terms; however, we do not wish to take liberties with its formal framework, see Section \ref{sec.prelims}. Fortunately, a middle ground exists, as described in Section \ref{sec.graph}.

Definition \ref{def.game-management} introduces the {\em investment management game},
whose core we will characterize. Interestingly, it is not a cooperative game; in fact, it has {\em only one agent}. Our game is best viewed as a {\em game against nature}, i.e., a game against a player whose pay-offs, as well as probability distribution over strategies, are unknown. The origin of this name comes from the experience of farmers from a time when weather prediction and irrigation systems were not well developed, making it difficult for them to predict the best choice of crops and forcing them to play in complete ignorance \cite{Biswas-Nature}. Our game has exactly this character, see  Definition \ref{def.game-management}. The saving grace is that in our game, a core imputation helps salvage the situation by enabling the agent to invest in such a way she is able to respond to {\em every strategy} of nature optimally! 
 
 After defining our game, in Remark \ref{rem.contrast} we draw a clear distinction between the manner in which the core is applied in cooperative game theory and in our game. In Section \ref{sec.graph}, we define our game on a graph. Its main computational problem that arises in it is to find a maximum weight independent set, which is ${\cal NP}$-hard in arbitrary graphs. However, the restriction of this problem to {\em perfect graphs} (Section \ref{sec.perfect}) is in ${\cal P}$  \cite{GLS}; this will be the setting for our game. 

In Section \ref{sec.perfect} we characterize the core of this game. A good way of describing this result is by drawing an analogy with the classic paper of Shapley and Shubik \cite{Shapley1971assignment} which characterized the core of the assignment game. They showed that the core of this game is precisely the set of optimal solutions to the dual of the LP-relaxation of the maximum weight matching problem in the underlying bipartite graph, see Section \ref{sec.prelims}.  As observed in \cite{Deng1999algorithms,Demange-Deng,Va.LP}, at the heart of the Shapley-Shubik proof lies the fact that the polytope defined by the constraints of this LP is integral, see Definition \ref{def.vertices-integral}. In turn, integrality follows from the fact that the constraint matrix of this LP is {\em totally unimodular (TUM)}, see Definition \ref{def.totally-unimodular}. The underlying reason for this  requirement of integrality is an inherent indivisibility in the game, e.g., in a cooperative game, such as the assignment game, agents are indivisible. 

Integrality holds for the maximum weight independent set problem for perfect graphs as well. However, the underlying reason is not TUM, but the more general condition of {\em total dual integrality (TDI)}, see Definition \ref{def.TDI}. Building on this fact, we show that the set of core imputations of our game is precisely the set of optimal solutions to the dual of the LP-relaxation of the maximum weight independent set problem for perfect graphs. Ours appears to be the first work which uses TDI for characterizing the core of a game; previous characterizations were based on TUM, see Section \ref{sec.related}.
 
Another novelty of our work is the following: In cooperative games, the profit of a sub-coalition under an imputation was defined via a {\em bottom-up process} in which the worth of the game was distributed among agents and the profit of a sub-coalition was simply the sum of profits of agents in it. In our game there is only one agent and the natural process is {\em top-down}, as described in Section \ref{sec.graph} and summarized in Remark \ref{rem.top-down}.

\begin{definition}
\label{def.game-management}
The {\em investment management game} involves {\em one agent}, a set $V$ of {\em assets} with a {\em cost function} $w: V \rightarrow \QQ_+$, and a set $\cM$ of {\em investment firms}. For concreteness, assume that assets are shares of specific companies, each worth a specific dollar amount; obviously, the number of shares in an asset changes according to the going price. Assume further that, much like a mutual fund, each investment firm specializes in holding shares of specific types of companies, e.g., Internet companies, software companies, computer hardware companies, automobile companies, etc. The difference is that whereas each share of a mutual fund represents a collection of assets in some predetermined proportions, in our setting, the agent {\em can buy individual assets} from a firm. 

The set of assets sold by a specific firm $f \in \cM$ is denoted by $Q_f \subseteq V$. The same asset, such as ``five thousand dollars worth of Microsoft'', may be sold by more than one firm, e.g., this  asset may be sold by a firm specializing in Internet companies as well as a firm specializing in software companies. Each subset $S \subseteq V$ of assets is called a {\em scenario}; hence there are exponentially many, namely $2^{|V|}$, scenarios.

Two or more assets which are sold by the same firm have obvious correlations and therefore do not constitute a low-risk investment. On the other hand, a set of assets which picks at most one asset from any investment firm constitutes a {\em diversified portfolio} --- it avoids correlations and is therefore considered a ``healthy'' investment. An {\em optimal investment} for a scenario $S$ is a largest possible diversified portfolio in it; formally, it is defined to be a maximum cost set $S' \subseteq S$ which picks at most one asset from any investment firm.

The {\em total money}, $T \in \QQ_+$ of the agent is just sufficient to buy an optimal investment in scenario $V$. The {\em strategy set} of the agent is all possible ways of distributing money $T$ among  the firms; each strategy will also be called an {\em imputation}. The rules of this game dictate that the money allocated to a firm $f \in \cM$ is available for use at {\em any asset} $v \in Q_f$. Note that this money need not be used for buying $v$; it is simply available at $v$. Therefore, the {\em money available in scenario} $S$ is the sum of money allocated to all firms which have at least one asset in $S$. Clearly, our game is a transferable utility (TU) game.


The {\em game} is the following: Find an imputation such that when, at a certain time in the future, nature picks a strategy, i.e., a scenario $S \subseteq V$, the money available in $S$ is sufficient to buy an optimal investment in $S$. Such an imputation is said to be in the {\em core} of the game. Thus a core imputation enables the agent to invest $T$ money in firms in such a way she is able to respond to {\em every strategy} of nature optimally. 
\end{definition}

\begin{remark}
	\label{rem.contrast}
In light of Definition \ref{def.game-management}, a clear distinction can be given between the manner in which the core is applied in cooperative game theory and in our game. Whereas in the former, a core imputation defines payoffs of all players that results in the stability of the grand coalition, in the latter, a core imputation captures an investment strategy, of the unique player, that encapsulates risk aversion (for the stated definition of ``risk'') under every future scenario. 
\end{remark}

\section{Related Works}
\label{sec.related}

Results characterizing cores of natural Transferable Utility (TU) games are given below. First, we mention the stable matching game, defined by Gale and Shapley \cite{GaleS}, in which preferences are ordinal, i.e., it is an NTU game. The only coalitions that matter in this game are ones formed by one agent from each side of the bipartition. A stable matching ensures that no such coalition has the incentive to secede and the set of such matchings constitute the core of this game. Vande Vate \cite{Vate1989linear} and Rothblum \cite{Rothblum1992characterization} gave linear programming formulations for stable matchings; the vertices of their underlying polytopes are integral and are stable matchings. Interestingly enough,  Kiraly and Pap \cite{TDI-Kiraly2008total} showed that the linear system of Rothblum is in fact TDI. 

A core imputation has to ensure that {\em each} of the exponentially many sub-coalitions is ``satisfied'' --- clearly, that is a lot of constraints. As a result, the core is known to be non-empty only for a handful of games, some of which are mentioned below; total unimodularity plays a key role in these results.

Deng et al. \cite{Deng1999algorithms} observed the role of integrality in the Shapley-Shubik Theorem and used this insight to distilled its underlying ideas to obtain a general framework which helps characterize the cores of several games that are based on fundamental combinatorial optimization problems, including maximum flow in unit capacity networks both directed and undirected, maximum number of edge-disjoint $s$-$t$ paths, maximum number of vertex-disjoint $s$-$t$ paths and maximum number of disjoint arborescences rooted at a vertex $r$. 

The survey of Demange and Deng \cite{Demange-Deng} on the notion of balancedness of Bondareva and Shapley \cite{Bondareva1963some, Shapley1965balanced} explored the role of integrality in depth. Towards the end of their paper, they note that TDI of a linear system is a very general condition that leads to integrality and therefore non-emptyness of the core and balancedness. However, they did not give the example of any game for which TUM does not hold and TDI has to be invoked for establishing integrality.

A natural generalization of the assignment game is the $b$-matching game in bipartite graphs. Biro et al. \cite{Biro2012computing} showed that the core non-emptiness and core membership problems for the $b$-matching game are solvable in polynomial time if $b \leq 2$ and are co-NP-hard even for $b = 3$. More recently, Vazirani \cite{Va.Char} showed that if $b$ is the constant function, then core imputations are precisely optimal solutions to the dual LP; this is analogous to the Shapley-Shubik theorem. Furthermore, Vazirani \cite{Va.Char} showed that if $b$ is arbitrary, then every optimal solutions to the dual LP is a core imputations; however, there are core imputations that are not optimal solutions to the dual LP. 

We next describe results for the facility location game. First, Kolen \cite{Kolen-facility} showed that for the unconstrained facility location problem, each optimal solution to the dual of the classical LP-relaxation is a core imputation if and only if this relaxation has no integrality gap. Later, Goemans and Skutella \cite{Goemans-Skutella} showed a similar result for any kind of constrained facility location game. They also proved that in general, for facility locations games, deciding whether the core is non-empty and whether a given allocation is in the core is NP-complete. 

Samet and Zemel \cite{Samet-Zemel} study games which are generated by linear programming optimization problems; these are called LP-games. For such games, It is well known that the set of optimal dual solutions is contained in the core and \cite{Samet-Zemel} gives sufficient conditions under which equality holds. These games do not ask for integral solutions and are therefore different in character from the ones studied in this paper.

Granot and Huberman \cite{Granot1981minimum, Granot-2-1984core} showed that the core of the minimum cost spanning tree game is non-empty and gave an algorithm for finding an imputation in it. Koh and Sanita \cite{Laura-Sanita} settle the question of efficiently determining if a spanning tree game is submodular; the core of such games is always non-empty. Nagamochi et al. \cite{Nagamochi1997complexity} characterize non-emptyness of core for the minimum base game in a matroid; the minimum spanning tree game is a special case of this game.

\section{Definitions and Preliminary Facts}
\label{sec.prelims}

In this section, we will give the standard definition of core in the setting of a cooperative game. For completeness, we will also formally state the characterization of the core of the assignment game given by Shapley and Shubik. In Section \ref{sec.graph} we will modify some of these definitions, and appropriately rename others, so that the notion of core can be used to study the investment management game. 

We will study {\em transferable utility (TU)} games, i.e., a games in which utilities of the agents are stated in monetary terms and side payments are allowed. For an extensive coverage of these notions, see the book by Moulin \cite{Moulin2014cooperative}.

\begin{definition}
	\label{def.cooperative-game}
	A {\em cooperative game} consists of a pair $(N, c)$ where $N$ is a set of $n$ agents and $v$ is the {\em characteristic function}; $c: 2^N \rightarrow \cR_+$, where for $S \subseteq N, \ c(S)$ is the {\em worth} that the sub-coalition $S$ can generate by itself. $N$ is also called the {\em grand coalition}.
\end{definition}

\begin{definition}
	\label{def.imputation}	
	An {\em imputation} gives a way of dividing the worth of the game, $v(N)$, among the agents. It can be viewed as a function $x: {N} \rightarrow \QQ_+$. For each sub-coalition $S \subseteq N$, we will define its {\em profit} as $\profit(S) = \sum_{i \in S} {x(i)}$.   
\end{definition}

\begin{definition}
	\label{def.satisfied}
Let $x$ be an imputation and $S \subseteq N$ a sub-coalition. We will say that $x$ {\em satisfies} $S$ if its profit is at least as large as its worth, i.e., $\profit(S) \geq \worth(S)$. 
\end{definition}
	
\begin{definition}
	\label{def.core}
	An imputation $x$ is said to be in the {\em core of the game} if it satisfies each sub-coalition $S \subseteq N$.
\end{definition}
 
The {\em assignment game}, consists of a bipartite graph $G = (U, V, E)$ and a weight function $w: E \rightarrow \QQ_+$. The agents of this game are $U \cup V$ and for each sub-coalition $(S_u \cup S_v)$, with $S_u \subseteq U$ and  $S_v \subseteq V$, its worth is defined to be the weight of a maximum weight matching in $G(S_u \cup S_v)$, where the latter is the subgraph of $G$ induced on the vertices $(S_u \cup S_v)$.

Linear program (\ref{eq.core-primal-bipartite}) gives the LP-relaxation of the problem of finding such a matching. In this program, variable $x_{ij}$ indicates the extent to which edge $(i, j)$ is picked in the solution. 

	\begin{maxi}
		{} {\sum_{(i, j) \in E}  {w_{ij} x_{ij}}}
			{\label{eq.core-primal-bipartite}}
		{}
		\addConstraint{\sum_{(i, j) \in E} {x_{ij}}}{\leq 1 \quad}{\forall i \in U}
		\addConstraint{\sum_{(i, j) \in E} {x_{ij}}}{\leq 1 }{\forall j \in V}
		\addConstraint{x_{ij}}{\geq 0}{\forall (i, j) \in E}
	\end{maxi}

The constraint matrix of LP (\ref{eq.core-primal-bipartite}) is totally unimodular (TUM), see Definition \ref{def.totally-unimodular}, and therefore the polytope defined by its constraints is integral, e.g., see \cite{LP.book}. Taking $u_i$ and $v_j$ to be the dual variables for the first and second constraints of (\ref{eq.core-primal-bipartite}), we obtain the dual LP: 

 	\begin{mini}
		{} {\sum_{i \in U}  {u_{i}} + \sum_{j \in V} {v_j}} 
			{\label{eq.core-dual-bipartite}}
		{}
		\addConstraint{ u_i + v_j}{ \geq w_{ij} \quad }{\forall (i, j) \in E}
		\addConstraint{u_{i}}{\geq 0}{\forall i \in U}
		\addConstraint{v_{j}}{\geq 0}{\forall j \in V}
	\end{mini}

\begin{theorem}
	\label{thm.SS}
	(Shapley and Shubik \cite{Shapley1971assignment})
	The imputation $(u, v)$ is in the core of the assignment game if and only if it is an optimal solution to the dual LP, (\ref{eq.core-dual-bipartite}). 
\end{theorem}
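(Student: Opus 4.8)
The plan is to prove both implications directly from LP-duality, using in one crucial spot the integrality of the primal polytope~(\ref{eq.core-primal-bipartite}), which holds because its constraint matrix is TUM. Write $N = U \cup V$ for the grand coalition. The one fact I would establish up front is that $\worth(N)$ --- the weight of a maximum weight matching in $G$ --- equals the optimum of the primal LP~(\ref{eq.core-primal-bipartite}), which in turn, by strong LP-duality, equals the optimum of the dual LP~(\ref{eq.core-dual-bipartite}): the first equality holds because primal integrality forces the primal optimum to be attained at the incidence vector of a matching, while conversely every matching is primal-feasible. I would also record that, by Definition~\ref{def.imputation}, every imputation $(u,v)$ satisfies $\profit(N) = \sum_{i\in U} u_i + \sum_{j\in V} v_j = \worth(N)$, i.e.\ its dual objective value equals the dual optimum.

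For the direction ``optimal dual solution $\Rightarrow$ core imputation'', suppose $(u,v)$ solves~(\ref{eq.core-dual-bipartite}) optimally. It is nonnegative and its objective value is the dual optimum $=\worth(N)$, so it is an imputation. To check that it satisfies an arbitrary sub-coalition $S = S_u \cup S_v$, I would take a maximum weight matching $M$ of $G(S_u \cup S_v)$, so that $\worth(S) = \sum_{(i,j)\in M} w_{ij}$, apply the dual constraint $u_i + v_j \geq w_{ij}$ to each edge of $M$, and sum; because $M$ is a matching and $u,v\geq 0$, we have $\sum_{(i,j)\in M}(u_i+v_j) \leq \sum_{i\in S_u} u_i + \sum_{j\in S_v} v_j = \profit(S)$, which yields $\profit(S)\geq\worth(S)$.

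For the converse, suppose $(u,v)$ lies in the core. Dual feasibility I would get by testing the core condition on two-agent coalitions: for each edge $(i,j)\in E$ the coalition $\{i,j\}$ has worth exactly $w_{ij}$, so the core inequality for $\{i,j\}$ reads $u_i+v_j\geq w_{ij}$, which is precisely the dual constraint; nonnegativity is already part of being an imputation. Optimality is then immediate: as an imputation, $(u,v)$ has dual objective value $\profit(N)=\worth(N)$, which equals the dual optimum, so $(u,v)$ is an optimal dual solution.

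I do not expect a genuinely hard step; the only place the argument uses more than bookkeeping and weak duality is the identification of $\worth(N)$ with the optimum of~(\ref{eq.core-primal-bipartite}), which rests entirely on integrality of the primal polytope. That is exactly the lever which will need to be replaced --- by total dual integrality in place of total unimodularity --- when the same template is later applied to the maximum weight independent set LP on perfect graphs in order to characterize the core of the investment management game.
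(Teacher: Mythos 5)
Your proof is correct. The paper does not actually reprove Theorem~\ref{thm.SS} (it is quoted from Shapley and Shubik), but your argument is the standard one and mirrors, step for step, the paper's proof of the analogous Theorem~\ref{thm.stable}: summing dual constraints over a sub-coalition's optimal matching (i.e.\ weak duality) for one direction, small coalitions --- here edges, there singletons --- for dual feasibility in the other, and integrality of the primal optimum (TUM here, TDI there) to identify the worth of the grand coalition with the common LP optimum.
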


\begin{definition}
	\label{def.vertices-integral}
	We will say that a {\em polytope is integral} if its vertices have all integral coordinates. 
\end{definition}

\begin{definition}
	\label{def.totally-unimodular}
	Let $Ax \leq b$ be a linear system where $A$ is an $m \times n$ matrix and $b$ an $m$-dimensional vector, both with integral entries.  $A$ is said to be {\em totally unimodular (TUM)} if every submatrix of $A$ has determinant $0, 1$ or $-1$. If so, the polytope of this linear system is  integral. 
\end{definition}

\begin{definition}
	\label{def.TDI}
	Let $Ax \leq b$ be a linear system where $A$ is an $m \times n$ matrix and $b$ an $m$-dimensional vector, both with rational entries. We will say that this linear system is {\em totally dual integral (TDI)} if for any integer-valued vector $c^T$ such that the linear program
	
\[ \max \{ cx: Ax \leq b \} \]

has an optimum solution, the corresponding dual linear program has an {\em integer} optimal solution. If so, the polytope of this linear system is integral. 
\end{definition}

Note that TDI is a more general condition than TUM for integrality of polyhedra. If $A$ is TUM then the polyhedron of the linear system $Ax \leq b$ is integral for every integral vector $b$. However, even if $A$ is not TUM, for specific choices of an integral vector $b$, the polytope of the linear system $Ax \leq b$ may be integral, and TDI may apply in this situation. It is important to remark that TDI is not a property of the polytope but of the particular linear system chosen to define it. See \cite{GLS, Sch-book} for further details.

\section{The Investment Management Game Defined on a Graph}
\label{sec.graph}

As stated in the Introduction, the investment management game, which is described at a high level in Definition \ref{def.game-management}, is a game against nature and it is not a cooperative game; in fact it has only one agent. A natural setting for this game is a graph, as described below. This is a  TU game: the money available at a firm can be used to buy assets from other firms, as specified below.  In order to study its core, we will modify some of the standard definitions given in Section \ref{sec.prelims},  and appropriately rename others.

Let $G = (V, E)$ be a graph whose vertices are {\em assets}; let $|V| = n$. The function $w: V \rightarrow \QQ_+$ defines the {\em cost} of each asset. Every {\em maximal clique}\footnote{It is easy to see that our result will hold even if we had defined every clique to be an investment firm. However, under that formulation, there would be numerous pairs of firms $f, f'$ with $Q_{f'} \subset Q_f$, making firm $f'$ redundant. Restricting to maximal cliques avoids this deficiency in the formulation.} in $G$ is an {\em investment firm}; let $\CM$ denote the set of all firms. For each firm $f \in \cM$, the assets sold by $f$ are represented by the vertices in its clique, $Q_f \subseteq V$. As stated in Definition \ref{def.game-management}, each investment firm specializes in holding shares of a specific type of companies. As a result, two or more assets held by the same firm will have obvious correlations. 

We now explain why investment firms are defined to be cliques, via the following analogy. Consider all the ATMs in the US of a certain bank. Since the money deposited in this bank, or in an ATM of this bank,  can be withdrawn from any of its ATMs, the set of all ATMs of this bank can be viewed as a clique, interconnected via a network. Similarly, since the money allocated to an investment firm is available for use at any of its assets, as stated in Definition \ref{def.game-management}, we have defined the investment firm to be a clique --- over its assets. 


Every set $S \subseteq V$ is called a {\em scenario}, i.e., scenario plays the same role as sub-coalition in a cooperative game. Let $G(S)$ denote the subgraph of $G$ induced on vertices in $S$. As required by Definition \ref{def.game-management}, an {\em optimal investment} in scenario $S$ is defined to be any maximum cost independent set\footnote{An independent set is also called a stable set, see Definition \ref{def.stable-set}.} in $G(S)$; clearly, this is a diversified investment since it picks at most one asset from any investment firm. Let $O_S \subseteq S$ denote such an investment and define
$$ \cost(S) := \sum_{v \in O_S} {w_v} .$$ 

Note that $\cost$ plays the same role as $\worth$ in  Definition \ref{def.cooperative-game} and the function $\cost: 2^V \rightarrow \QQ_+$ plays the same role as the characteristic function. The {\em total money} of the agent is defined to be $T = \cost(V)$, i.e., just sufficient to buy an optimal investment in $G$. 
 
A function $y: \cM \rightarrow \QQ_+$ where $\sum_{Q \in \cM} {y_Q} = T$ is called an {\em imputation}, i.e., it is a way of distributing $T$ money among the investment firms. The set of all such functions $y$ can also be viewed as the {\em strategy set} of the unique agent in the game. For any scenario $S$, the {\em money available} for buying assets in $S$ is defined to be the sum of money in all investment firms which contain at least one asset from $S$, i.e., 
$$ \money(S) :=  \sum_{Q \in G(S): \ Q \cap S \neq \emptyset}  {y_Q} ,$$
where ``$Q \in G(S)$'' is short for ``clique $Q$ in $G(S)$''. Strictly speaking, the summation should be over maximal cliques, but since for a non-maximal clique $Q$, $y_Q$ can be assumed to be zero, this distinction can be dropped. Notice that $\money(S)$ plays the role of {\em profit} of $S$ in a cooperative game. 
 
By Definition \ref{def.satisfied}, scenario $S$ is satisfied by imputation $y$ if $\money(S) \geq \cost(S)$ and by Definition \ref{def.core}, imputation $y$ is said to be in the {\em core} of this game if it satisfies every scenario, i.e.,
$$ \forall S \subseteq V, \ \ \money(S) \geq \cost(S) .$$

Remark \ref{rem.top-down} summarizes the definitions given above and points out an important difference from the setting of cooperative games. 
 
\begin{remark}
 	\label{rem.top-down}
 In a cooperative game, an {\em imputation} distributes the total worth of the game among agents and the {\em profit of a sub-coalition} is defined to be the sum of the profits of its agents; the latter can be viewed as a {\em bottom-up process}. Clearly these processes do not apply to our game, since it has a unique agent and the notion of a ``sub-coalition'' is replaced by that of a scenario. 
 
  The natural way of defining an imputation and the money available in a scenario can be summarized as follows: an imputation distributes the total money of the game among ``large'' objects --- the firms, which are maximal cliques --- and the money available in a scenario is defined via a {\em top-down process}, by summing the money of all cliques which intersect this scenario. 
 \end{remark}

\subsection{Limitations of this Model, and Desired Properties} 
\label{sec.limitations}

The problem of computing a maximum cost independent set in an arbitrary graph is NP-hard, even if all vertex costs are unit. This NP-hardness also indicates that the game defined above lacks structural properties that could lead to an understanding of its core. In sharp contrast, the assignment game is in $\cal P$ and its LP-relaxation supports integrality of the underlying polytope, therefore leading to a characterization of its core. Hence, the model defined above, on an arbitrary graph, is too general to be  useful. 

To be useful, the model should allow for:

\begin{enumerate}
	\item A characterization of the core; in particular, determine if the core is non-empty.
	\item Efficient computation of $T$, and $\cost(S)$ for any scenario $S$.
	\item An efficient algorithm for computing an imputation in the core. 
\end{enumerate}

In the next section, we show that restricting the game to perfect graphs gives all these properties.

\section{The Investment Management Game on Perfect Graphs}
\label{sec.perfect}

In this section, we will study a restriction of the investment management game to perfect graphs. In Section \ref{sec.def-perfect} we give the required definitions and facts from the (extensive) theory of perfect graphs. We will not credit individual papers for these facts; instead, we refer the reader to Chapter 9 of the book \cite{GLS} as well as the remarkably clear and concise exposition of this theory, presented as an ``appetizer'' by Groetschel \cite{Grotschel1999my}. In Section \ref{sec.char-core} we will use these facts to characterize the core of this game.

\subsection{Definitions and Preliminaries}
\label{sec.def-perfect}

\begin{definition}
	\label{def.omega-chi}
	Given a graph $G = (V, E)$, $\omega(G)$ denotes its {\em clique number}, i.e., the size of the largest clique in it and $\chi(G)$ denotes its {\em chromatic number}, i.e., the minimum number of colors needed for its vertices so that the two endpoints of any edge get different colors. 
\end{definition}

\begin{definition}
	\label{def.perfect}
	A graph $G = (V, E)$ is said to be {\em perfect} if and only if the clique number and chromatic number are equal for each vertex-induced subgraph of $G$, i.e., 	
	$$ \forall S \subseteq V, \	\omega(G(S)) = \chi(G(S)).$$
\end{definition}

Let $\oG$ denote the {\em complement} of $G$, i.e., $\oG = (V, \oE)$, where $\oE$ is the complement of $E$, with $\forall \ u, v \in V, (u,v) \in E$ if and only if $(u, v) \notin \oE$. A central fact about perfect graphs is that $G$ is perfect if and only if $\oG$ is perfect. 

\begin{definition}
	\label{def.stable-set}
Set $S \subseteq V$ is said to be a {\em stable set} in $G$, also sometimes called an {\em independent set}, if no two vertices of $S$ are connected by an edge, i.e., $\forall \ u, v \in S, \ (u, v) \notin E$. Let $w: V \rightarrow \QQ_+$ be a weight function on the vertices of $G$. 
\end{definition}

Let $G$ be an arbitrary graph. Clearly any clique in $G$ can intersect a stable set in at most one vertex, and therefore the constraint in LP (\ref{eq.stable-primal}) is satisfied by every stable set; note that variable $x_v$ indicates the extent to which $v$ is picked in a fractional stable set. LP (\ref{eq.stable-primal}) contains such a constraint for each clique in $G$ and is an LP-relaxation of the maximum weight stable set problem in $G$. However, LP (\ref{eq.stable-primal}) has exponentially many constraints, one corresponding to each clique in $G$; moreover, it is NP-hard to solve in general \cite{GLS}.

	\begin{maxi}
		{} {\sum_{v \in V}  {w_{v} x_{v}}}
			{\label{eq.stable-primal}}
		{}
		\addConstraint{x(Q)}{\leq 1 \quad}{\forall \ \mbox{clique $Q$ in} \ G}
		\addConstraint{x_{v}}{\geq 0}{\forall v \in V}
	\end{maxi}

The situation is salvaged in case $G$ is a perfect graph: with the help of the Lovasz theta function, one can can show that LP (\ref{eq.stable-primal}) can be solved in polynomial time using the ellipsoid algorithm \cite{GLS, Grotschel1999my}. 

Below is the dual LP, which is obtained by taking $y_Q$ to be the dual variable for the constraint of LP (\ref{eq.stable-primal}). The dual LP is solving a clique covering problem. 

 	\begin{mini}
		{} {\sum_{\mbox{clique $Q$ in} \ G}  {y_Q}} 
			{\label{eq.stable-dual}}
		{}
		\addConstraint{\sum_{Q \ni v} {y_Q}}{ \geq w_{v} \quad }{\forall v \in V}
		\addConstraint{y_{Q}}{\geq 0}{\forall \ \mbox{clique $Q$ in} \ G}
	\end{mini}

A key fact for our purpose is that the linear system of LP (\ref{eq.stable-primal}) is totally dual integral (TDI), see Definition \ref{def.TDI}, for perfect graphs \cite{GLS, Grotschel1999my}.  We provide a proof sketch below. 

By Definition \ref{def.omega-chi}, if $G$ is perfect, $\omega (G) = \chi(G)$. Next, consider the complement of $G$, namely $\overline{G}$; two vertices are adjacent in $G$ if and only if they are not adjacent in $\overline{G}$. Denote by $\alpha(G)$ and $\overline{\chi}(G)$ the size of the  largest stable set and the minimum number of disjoint cliques needed to cover all vertices of $G$, respectively. Clearly, $\omega(G) = \alpha(\overline{G})$ and $\chi(G) = \overline{\chi}(\overline{G})$. Furthermore, since the complement of a perfect graph is also perfect, we get that for a perfect graph $G$, $\alpha(G) = \overline{\chi}(G)$. 

Consider LP (\ref{eq.stable-primal}) and its integer programming formulation for the case that the weight function $w \in \{0, 1\}^n$, and let $L_p(w)$ and $I_p(w)$ denote their optimal objective function values. For the same restriction on $w$, let $L_d(w)$ and $I_d(w)$ denote the optimal objective function values of LP (\ref{eq.stable-dual}) and its integer programming formulation. Now,
 $$I_p(w) \leq L_p(w) = L_d(w) \leq I_d(w), $$ 
 where the equality follows from the LP-duality theorem, and the inequalities follow from the relation between integral and fractional solutions. 

For $w \in \{0, 1\}^n$, let $G'$ be the subgraph of $G$ induced on vertices $v$ for which $w_v = 1$. Since $G'$ is also perfect, $I_p(w) = \alpha(G')$ and $I_d(w) = \overline{\chi}(G')$. Since $\alpha(G') = \overline{\chi}(G')$, we get that $I_p(w) = I_d(w)$ and hence equality holds for all four programs defined above.   

To show that LP (\ref{eq.stable-primal}) is TDI we must show that for every weight function $w \in \ZZ_+^n$, the optimal objective function value of the dual is integral. Since $L_d(w) = I_d(w)$, this is the case if $w \in \{0, 1\}^n$. Finally, the results of Fulkerson \cite{Fulkerson1972anti} and  Lovasz \cite{Lovasz1972normal} show that this integrality implies integrality of the optimal dual even if $w \in \ZZ_+^n$. Therefore the linear system of LP (\ref{eq.stable-primal}) is TDI and hence the polytope defined by it is integral; the vertices of this polytope are stable sets.

\subsection{Characterizing the Core of the Game}
\label{sec.char-core}

Since the firms in the investment management game are {\em maximal} cliques, we first restrict the  constraint in LP(\ref{eq.stable-primal}) to maximal cliques only to obtain LP (\ref{eq.maximal-primal}). Observe that if $Q$ is a clique in $G$ and $Q'$ is a sub-clique of $Q$, then the constraint in LP (\ref{eq.stable-primal}) corresponding to $Q'$ is redundant and can be removed, since it is implied by the constraint corresponding to $Q$. Continuing in this manner, we are left with constraints corresponding to maximal cliques only. Therefore, the two LPs are equivalent.

	\begin{maxi}
		{} {\sum_{v \in V}  {w_{v} x_{v}}}
			{\label{eq.maximal-primal}}
		{}
		\addConstraint{x(Q)}{\leq 1 \quad}{\forall \ \mbox{maximal clique $Q$ in} \ G}
		\addConstraint{x_{v}}{\geq 0}{\forall v \in V}
	\end{maxi}
	
The dual of LP (\ref{eq.maximal-primal}) is given below. 

 	\begin{mini}
		{} {\sum_{\mbox{clique $Q$ in} \ G}  {y_Q}} 
			{\label{eq.maximal-dual}}
		{}
		\addConstraint{\sum_{Q \ni v} {y_Q}}{ \geq w_{v} \quad }{\forall v \in V}
		\addConstraint{y_{Q}}{\geq 0}{\forall \ \mbox{maximal clique $Q$ in} \ G}
	\end{mini}

We next show that the linear system of LP (\ref{eq.maximal-primal}) is also TDI. This is not immediate, since as stated after Definition \ref{def.TDI}, TDI is not a property of the polytope but of the particular linear system chosen to define it. 

\begin{lemma}
\label{lem.TDI}
	The linear system of LP (\ref{eq.maximal-primal}) is TDI.
\end{lemma}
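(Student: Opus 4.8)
The goal is to transfer the TDI property from LP~(\ref{eq.stable-primal}), which is known to be TDI for perfect graphs, to LP~(\ref{eq.maximal-primal}), whose constraints range only over \emph{maximal} cliques. The plan is to argue that the two linear systems are ``equivalent for TDI purposes'' by exhibiting, for any integral weight vector $w \in \ZZ_+^n$, an integral optimal dual solution of LP~(\ref{eq.maximal-dual}) from an integral optimal dual solution of LP~(\ref{eq.stable-dual}). First I would fix an integral $w$ for which $\max\{\dotprod{w}{x} : x(Q) \le 1 \ \forall \text{ clique } Q,\ x \ge 0\}$ has an optimum; since the maximal-clique system has fewer constraints but they imply all the others, the primal feasible region is literally the same polytope, so the primal optima and optimal value coincide. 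Hence by LP duality the two dual LPs have the same optimal value; it remains only to produce an \emph{integral} dual optimum for LP~(\ref{eq.maximal-dual}).

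The key step is the ``lifting'' of an integral dual solution. Take an integral optimal $y$ for LP~(\ref{eq.stable-dual}), which exists because that system is TDI (as recalled in Section~\ref{sec.def-perfect}). For each non-maximal clique $Q$ with $y_Q > 0$, pick any maximal clique $Q^\star \supseteq Q$ and move the mass: set $y_{Q^\star} \mathrel{+}= y_Q$ and $y_Q \mathrel{=} 0$. Because $Q \subseteq Q^\star$, every vertex $v \in Q$ still lies in $Q^\star$, so the covering constraint $\sum_{Q' \ni v} y_{Q'} \ge w_v$ is preserved (the coverage at each $v \in Q$ is unchanged, and coverage at $v \in Q^\star \setminus Q$ only increases); the objective $\sum_{Q'} y_{Q'}$ is likewise unchanged. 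The result is an integral vector supported on maximal cliques, still feasible and still of the same (optimal) value, hence an integral optimal solution to LP~(\ref{eq.maximal-dual}). Since $w \in \ZZ_+^n$ was arbitrary, LP~(\ref{eq.maximal-primal}) is TDI.

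The main thing to be careful about is the definition of TDI itself: it requires that \emph{for every} integral $c$ (here $w$) for which the primal has an optimum, the dual of the \emph{chosen} linear system has an integral optimum. So I must not merely invoke TDI of LP~(\ref{eq.stable-primal}) as a black box over the polytope — I must genuinely exhibit the integral dual optimum for the maximal-clique system, which is exactly what the mass-moving argument does. A minor subtlety worth a sentence: the primal feasible regions coincide only after noting that a constraint for a sub-clique $Q' \subseteq Q$ is implied by the constraint for $Q$ together with $x \ge 0$ (since $x(Q') \le x(Q) \le 1$), which is already observed in the paragraph preceding the lemma. I do not expect any real obstacle here; the proof is short, and the only ``work'' is the bookkeeping that moving dual mass from a clique to a superclique preserves feasibility and objective value.
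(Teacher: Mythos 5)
Your proposal is correct and follows essentially the same route as the paper's own proof: first noting that sub-clique constraints are redundant so the two primal systems are equivalent, then converting an integral optimal dual of LP~(\ref{eq.stable-dual}) into one for LP~(\ref{eq.maximal-dual}) by shifting the dual mass of each non-maximal clique onto a containing maximal clique, which preserves integrality, feasibility, and the objective value. No substantive differences to report.
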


\begin{proof} 
Let $Q$ be a maximal clique in $G$, $Q'$ be its sub-clique and $x$ be a vector of variables $x_v$ for each vertex $v \in V$. For any $x$ whose coordinates have been set to non-negative numbers, if $x(Q) \leq 1$ then $x(Q') \leq 1$, i.e., the latter constraint is redundant. Therefore the linear system of (\ref{eq.maximal-primal}) is obtained by removing redundant constraints from the linear system of (\ref{eq.stable-primal}). Hence the two primal LPs (\ref{eq.stable-primal}) and (\ref{eq.maximal-primal}) are equivalent.

Since the linear system of LP (\ref{eq.stable-primal}) is TDI, for any integer-valued cost vector $w$ such that the linear program (\ref{eq.stable-primal}) has an optimum solution, the dual linear program (\ref{eq.stable-dual}) has an {\em integer} optimal solution. We will use this fact to show that the same holds for LPs (\ref{eq.maximal-primal}) and (\ref{eq.maximal-dual}) as well.

Let $w$ be an integer-valued cost vector such that the linear program (\ref{eq.stable-primal}) has an optimum solution and let $y$ be the corresponding integer optimal solution to LP (\ref{eq.stable-dual}). Since the two primal LPs given above are equivalent, LP (\ref{eq.maximal-primal}) also has an optimum solution for $w$. 

Use $y$ to construct a dual solution $y'$ for LP (\ref{eq.maximal-dual}) using the following operation: Suppose $y_{Q'} > 0$, where $Q'$ is not a maximal clique. Let $Q$ be any maximal clique of which $Q'$ is a sub-clique. Now subtract $y_{Q'}$ from $y_{Q'}$ and add it to $y_{Q}$. Observe that new dual obtained is integral, its objective value remains unchanged and it is feasible for LP (\ref{eq.stable-dual}) since for each vertex $v$, ${\sum_{Q \ni v} {y_Q}}$ remains unchanged. 

Repeat this operation until $y_{Q} > 0$ only if $Q$ is a maximal clique. Call the resulting dual $y'$. Clearly, $y'$ is an integral optimal dual for LP (\ref{eq.maximal-dual}). Therefore LP (\ref{eq.maximal-primal}) is TDI. 
\end{proof}

In our setting, an optimal dual distributes the worth of the game among the maximal cliques of $G$. However, dividing the money $y_Q$, given to clique $Q$, among the vertices in the clique in not very meaningful, see Example \ref{ex.Paley} for a detailed explanation. As described in Theorem \ref{thm.stable}, in our game, the money available to a {\em any scenario} is defined via a different process, see also Remark \ref{rem.top-down}.

\begin{example}
	\label{ex.Paley}
The $3 \times 3$ Paley graph, which is a perfect graph, is shown in Figure \ref{fig.Paley}. It has three disjoint maximum stable sets of size three each, one of each color, and three disjoint maximal cliques, one of which is shown in bold. Under unit cost for each asset, the worth of the investment management game on this graph is 3 and the optimal dual assigns 1 to each of the three cliques. Consider three scenarios, each consisting of a maximum stable set. The optimal investment in each scenario is to buy all three of its assets, requiring 3 units of money. If the cliques were to distribute their unit money to the assets, then at most one of these scenarios can be satisfied: by each clique allocating its money to a different colored asset of the {\em same} scenario. Therefore, the dual on a clique cannot be distributed among its vertices. 
\end{example}


\begin{figure}[h]
\begin{center}
\includegraphics[width=3.2in]{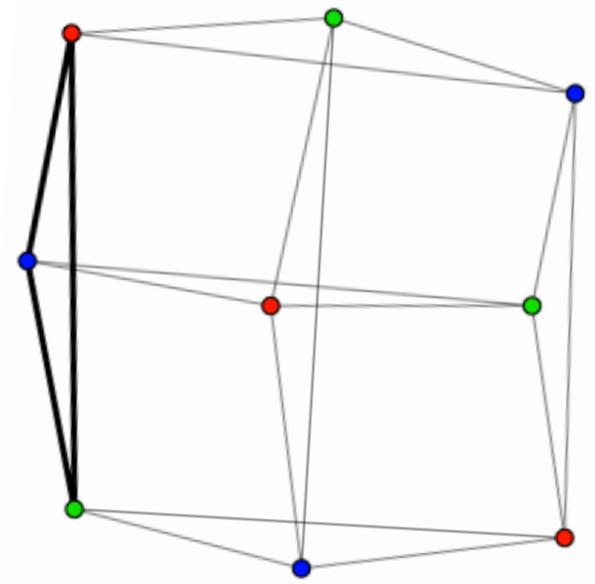}
\caption{The $3 \times 3$ Paley graph for Example \ref{ex.Paley}}
\label{fig.Paley}
\end{center}
\end{figure}


\begin{theorem}
	\label{thm.stable}
	An imputation $y$ is in the core of the stable set game over a perfect graph if and only if it is an optimal solution to the dual LP(\ref{eq.maximal-dual}). 
\end{theorem}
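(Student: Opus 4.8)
The plan is to derive the theorem from Lemma~\ref{lem.TDI} and LP-duality, with essentially no further combinatorics. The one preliminary fact to pin down is the common value of the two LPs: since the linear system of LP~(\ref{eq.maximal-primal}) is TDI, the polytope it defines is integral and its vertices are stable sets, so the optimum of LP~(\ref{eq.maximal-primal}) equals the maximum weight of a stable set of $G$, which is exactly $\cost(V)=T$; by strong duality the optimum of LP~(\ref{eq.maximal-dual}) is also $T$. Since an imputation is by definition a nonnegative vector $y$ on the maximal cliques with $\sum_{Q\in\cM}y_Q=T$, this means: an imputation $y$ is an optimal solution of LP~(\ref{eq.maximal-dual}) if and only if it satisfies the dual constraints $\sum_{Q\ni v}y_Q\ge w_v$ for every $v\in V$. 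So the whole theorem reduces to showing that, among imputations, membership in the core is equivalent to satisfying these per-vertex constraints.

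For the ``core $\Rightarrow$ optimal dual'' direction, let $y$ be in the core; it is an imputation, so it suffices to check dual feasibility. Fix a vertex $v$ and apply the core inequality to the singleton scenario $S=\{v\}$. A maximum-weight stable set of $G(\{v\})$ is $\{v\}$ itself, so $\cost(\{v\})=w_v$; and the firms meeting $\{v\}$ are exactly the maximal cliques containing $v$, so $\money(\{v\})=\sum_{Q\ni v}y_Q$. The core condition $\money(\{v\})\ge\cost(\{v\})$ is thus precisely $\sum_{Q\ni v}y_Q\ge w_v$, giving dual feasibility and hence, by the preceding paragraph, optimality.

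For the ``optimal dual $\Rightarrow$ core'' direction, let $y$ be an optimal solution of LP~(\ref{eq.maximal-dual}); then $\sum_Q y_Q=T$, so $y$ is an imputation, and it remains to verify $\money(S)\ge\cost(S)$ for every scenario $S$. Let $O_S\subseteq S$ be an optimal investment, i.e.\ a maximum-weight stable set of $G(S)$, so $\cost(S)=\sum_{v\in O_S}w_v$. Summing the dual inequality $\sum_{Q\ni v}y_Q\ge w_v$ over $v\in O_S$ and interchanging the order of summation yields
$$\cost(S)=\sum_{v\in O_S}w_v\ \le\ \sum_{v\in O_S}\sum_{Q\ni v}y_Q\ =\ \sum_{Q\in\cM}y_Q\,\lvert Q\cap O_S\rvert .$$
Now each firm $Q$ is a clique and $O_S$ is a stable set, so $\lvert Q\cap O_S\rvert\le 1$; moreover $Q\cap O_S\neq\emptyset$ forces $Q\cap S\neq\emptyset$ since $O_S\subseteq S$. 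Using $y_Q\ge 0$, the right-hand side is therefore at most $\sum_{Q\in\cM:\,Q\cap S\neq\emptyset}y_Q=\money(S)$, which is the required inequality.

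The proof has no serious obstacle; the only points requiring care are the identification of the common LP value with $T$, which is exactly where the integrality guaranteed by Lemma~\ref{lem.TDI} is used, and the bookkeeping that $\money(S)$ sums $y_Q$ over maximal cliques of $G$ that \emph{meet} $S$ (not over cliques contained in $S$), so that $O_S\subseteq S$ is what lets the stable-set bound go through. In passing: since LP~(\ref{eq.maximal-dual}) is feasible and bounded it always has an optimal solution, so the theorem also shows the core of this game is non-empty, and — together with the ellipsoid-based solvability of these LPs on perfect graphs — that a core imputation can be computed in polynomial time.
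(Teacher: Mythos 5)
Your proof is correct, and the forward direction (core implies optimal dual, via singleton scenarios and the identification of the common LP value with $T$) matches the paper's argument exactly. Where you genuinely diverge is in the reverse direction. The paper handles a scenario $S$ by aggregating $y$ into a dual solution $z$ for the clique-covering LP on the induced subgraph $G(S)$ (setting $z_{Q'}=\sum_{Q:\,Q\cap S=Q'}y_Q$), then invoking the fact that $G(S)$ is again perfect, hence TDI, and applying weak duality on $G(S)$ to get $\cost(S)\le\money(S)$. You instead sum the dual constraints over the vertices of an optimal stable set $O_S$ and use the purely combinatorial facts that $\lvert Q\cap O_S\rvert\le 1$ and that $Q\cap O_S\neq\emptyset$ forces $Q\cap S\neq\emptyset$; this is a direct, hand-computed weak-duality bound that never mentions $G(S)$ as an LP instance. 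Your route is shorter and makes transparent that perfection (via Lemma~\ref{lem.TDI}) is used exactly once, to certify that the optimum of LP~(\ref{eq.maximal-primal}) is the integral value $T$; everything scenario-wise is elementary counting. The paper's route is heavier for this step but highlights the hereditary structure of perfect graphs and, as a byproduct, records that $\cost(S)$ coincides with the LP optimum on $G(S)$ for every $S$, which is what makes $\cost(S)$ efficiently computable (one of the desiderata of Section~\ref{sec.limitations}). Both arguments are sound; your closing observations on non-emptiness and polynomial-time computability of a core imputation agree with Corollary~\ref{cor.stable}.
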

	
\begin{proof}
($\Leftarrow$) 
Let $y$ be an optimal solution to the dual LP(\ref{eq.maximal-dual}). By Lemma \ref{lem.TDI}, the linear system of LP(\ref{eq.maximal-primal}) is TDI and therefore a maximum cost stable set in $G$ is an optimal solution to this LP. This fact together with the LP-duality theorem give:
$$ T = \cost (V) = \sum_{Q \in G}  {y_Q} = \money(V) .$$
Therefore $y$ is an imputation.  

Consider a scenario $S \subseteq V$. By Definition \ref{def.imputation}, 
$$ \money(S) :=  \sum_{Q \in G: \ Q \cap S \neq \emptyset}  {y_Q} .$$

Clearly $Q \cap S$ is a clique in $G(S)$. Next, we will define a function, $z$, on cliques in $G(S)$ as follows: For a clique $Q'$ in $G(S)$, define
\[ z_{Q'} := \sum_{Q \in G: \ Q \cap S = Q'} {y_Q} .\]
Since each clique $Q$ of $G$ which has a non-empty intersection with $S$ will contribute to exactly one clique in $G(S)$, namely $Q \cap S$, we get 
$$ \money(S) =   \sum_{Q' \in G(S)}  {z_Q'} .$$

Next, we observe that $z$ is a feasible solution to the restriction of LP (\ref{eq.stable-dual}) to $G(S)$ because 
		$$ \forall v \in S: \ \sum_{Q' \in G(S): \ Q' \ni v} {z_{Q'}} = \sum_{Q \in G: \ Q \ni v} {y_Q} \geq w_v .$$

Since the subgraph of a perfect graph is also perfect, $G(S)$ is a perfect graph and the restriction of LP (\ref{eq.stable-primal}) to $G(S)$ satisfies TDI. Therefore a maximum cost stable set in $G(S)$ is an optimal solution to the latter LP. This fact together with weak duality give us
$$ \cost(S) \leq \sum_{Q \in G(S)}  {z_Q}  = \money(S) ,$$
i.e., imputation $y$ satisfies scenario $S$. Therefore $y$ is a core imputation. 

\bigskip

($\Rightarrow$) Next, assume that $y$ is a core imputation of the investment management game over a perfect graph $G$. 

For $v \in V$, consider the scenario $S = \{v\}$. Since $y$ is in the core, by Definition \ref{def.core},
$$ \money(S) = \sum_{Q \in G: \ Q \cap S \neq \emptyset}  {y_Q} \leq \cost(S) = w_v .$$
Therefore, $y$ is a feasible solution for LP(\ref{eq.maximal-dual}). 

By Definition \ref{def.imputation}, 
$$ \money(V) =  \sum_{Q \in  G}  {y_Q} = T = \cost(V) .$$
Therefore by the TDI of LP(\ref{eq.maximal-primal}), the objective function value of $y$ is the same as that of the optimal value of the primal. This together with the feasibility of $y$ establishes that $y$ is an optimal solution to the dual LP(\ref{eq.maximal-dual}). 
\end{proof}

\begin{corollary}
	\label{cor.stable}
	The core of the stable set game over a perfect graph is non-empty. 
\end{corollary}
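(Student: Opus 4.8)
The plan is to derive this as an immediate consequence of Theorem~\ref{thm.stable}. That theorem identifies the core of the stable set game on a perfect graph $G$ with the set of optimal solutions to the dual LP~(\ref{eq.maximal-dual}). Hence, to prove the core is non-empty, it suffices to argue that LP~(\ref{eq.maximal-dual}) admits an optimal solution, or equivalently — by the LP-duality theorem — that the primal LP~(\ref{eq.maximal-primal}) admits one.

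So the first step is to observe that LP~(\ref{eq.maximal-primal}) is feasible: $x = 0$ satisfies every clique constraint $x(Q) \leq 1$ and the nonnegativity constraints. The second step is to observe that it is bounded: every vertex $v \in V$ lies in some maximal clique $Q$ (take any maximal clique containing $\{v\}$), and feasibility forces $0 \leq x_v \leq x(Q) \leq 1$, so the feasible region is contained in the cube $[0,1]^V$; since the objective $\sum_{v} w_v x_v$ is linear and $V$ is finite, the supremum is finite and attained. Therefore LP~(\ref{eq.maximal-primal}) has an optimal solution, and by strong LP-duality so does its dual LP~(\ref{eq.maximal-dual}). (Alternatively, one can note directly that the dual is feasible — set $y_Q$ large enough on a clique cover — and bounded below by $0$.)

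The final step is simply to invoke the ($\Leftarrow$) direction of Theorem~\ref{thm.stable}: any optimal solution $y$ to LP~(\ref{eq.maximal-dual}) is a core imputation. Since such a $y$ exists by the previous paragraph, the core is non-empty.

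I do not expect any real obstacle here: the statement is a genuine corollary, and the only point requiring a line of justification is the existence of an optimal dual solution, which rests on the elementary boundedness-plus-feasibility argument above together with the already-established Theorem~\ref{thm.stable}. The substantive content — that optimal duals are exactly core imputations, and in particular that the dual optimum is achievable at all with the relevant integrality properties — has already been carried out in Lemma~\ref{lem.TDI} and Theorem~\ref{thm.stable}.
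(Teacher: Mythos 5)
Your proposal is correct and matches the paper's (implicit) argument: the corollary follows immediately from Theorem~\ref{thm.stable} once one notes that the dual LP~(\ref{eq.maximal-dual}) attains an optimum, which your feasibility-plus-boundedness observation supplies. No gaps.
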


\section{Discussion}
\label{sec.discussion}

As stated in the Introduction, the notion of core was given in the nineteenth century by Edgeworth \cite{Edgeworth1881mathematical} in the context of general equilibrium theory, and it was ported to cooperative game theory in the 1950s by Gillies \cite{Gillies-Core}. However, we have applied this notion in the context of a game against nature. Hence perhaps the most important question raised by our work is determining the ``natural home'' of  this notion within game theory. A related question is the following: Remark \ref{rem.contrast} draws a clear contrast between the way the core is used in cooperative game theory and in our game. Are there other ways of interpreting and using the notion of core, without taking liberties with its formal framework?

A simple way of defining the game presented in this paper would use the graph-theoretic language of stable sets and cliques. Definition \ref{def.game-management} moves a step towards modeling an economic situation. There is no doubt that our model falls short of providing a solution concept for a realistic economic situation; however, that was not the intent of this paper. We simply wanted to provide evidence that the beautiful solution concept of core is more versatile than previously envisaged. We hope other researchers will be able to build on this viewpoint to find realistic applications of the notion of core outside of cooperative game theory.

\section{Acknowledgements}
\label{sec.ack}

I wish to thank Martin Bullinger insightful comments on the writeup, and Gerard Cornuejols, Federico Echenique, Naveen Garg, Martin Groetschel, Ruta Mehta, Joseph Root, Thorben Trobst and Richard Zeckhauser for valuable discussions.

	\bibliographystyle{alpha}
	\bibliography{refs}

\begin{thebibliography}{NZKI97}

\bibitem[BB97]{Biswas-Nature}
Tapan Biswas and Tapan Biswas.
\newblock Games against nature and the role of information in decision-making
  under uncertainty.
\newblock {\em Decision-making under Uncertainty}, pages 185--196, 1997.

\bibitem[BKP12]{Biro2012computing}
P{\'e}ter Bir{\'o}, Walter Kern, and Dani{\"e}l Paulusma.
\newblock Computing solutions for matching games.
\newblock {\em International journal of game theory}, 41(1):75--90, 2012.

\bibitem[Bon63]{Bondareva1963some}
Olga~N Bondareva.
\newblock Some applications of linear programming methods to the theory of
  cooperative games.
\newblock {\em Problemy Kibernetiki}, 10(119):139, 1963.

\bibitem[DD10]{Demange-Deng}
Gabrielle Demange and Xiaotie Deng.
\newblock Universally balanced combinatorial optimization games.
\newblock {\em Games}, 1(3):299--316, 2010.

\bibitem[DIN99]{Deng1999algorithms}
Xiaotie Deng, Toshihide Ibaraki, and Hiroshi Nagamochi.
\newblock Algorithmic aspects of the core of combinatorial optimization games.
\newblock {\em Mathematics of Operations Research}, 24(3):751--766, 1999.

\bibitem[Edg81]{Edgeworth1881mathematical}
Francis~Ysidro Edgeworth.
\newblock {\em Mathematical psychics: An essay on the application of
  mathematics to the moral sciences}, volume~10.
\newblock CK Paul, 1881.

\bibitem[Ful72]{Fulkerson1972anti}
Delbert~R Fulkerson.
\newblock Anti-blocking polyhedra.
\newblock {\em Journal of Combinatorial Theory, Series B}, 12(1):50--71, 1972.

\bibitem[GH81]{Granot1981minimum}
Daniel Granot and Gur Huberman.
\newblock Minimum cost spanning tree games.
\newblock {\em Mathematical programming}, 21(1), 1981.

\bibitem[GH84]{Granot-2-1984core}
Daniel Granot and Gur Huberman.
\newblock On the core and nucleolus of minimum cost spanning tree games.
\newblock {\em Mathematical programming}, 29(3):323--347, 1984.

\bibitem[Gil59]{Gillies-Core}
Donald~B Gillies.
\newblock Solutions to general non-zero-sum games.
\newblock {\em Contributions to the Theory of Games}, 4(40):47--85, 1959.

\bibitem[GLS88]{GLS}
M.~Grotschel, L.~Lovasz, and A.~Schirjver.
\newblock {\em Geometric Algorithms and Combinatorial Optimization}.
\newblock Springer-Verlag, 1988.

\bibitem[Gr{\"o}99]{Grotschel1999my}
Martin Gr{\"o}tschel.
\newblock My favorite theorem: characterizations of perfect graphs.
\newblock 1999.

\bibitem[GS62]{GaleS}
David Gale and Lloyd~S Shapley.
\newblock College admissions and the stability of marriage.
\newblock {\em The American Mathematical Monthly}, 69(1):9--15, 1962.

\bibitem[GS04]{Goemans-Skutella}
Michel~X. Goemans and Martin Skutella.
\newblock Cooperative facility location games.
\newblock {\em Journal of Algorithms}, 50(2):194--214, 2004.

\bibitem[Kol83]{Kolen-facility}
Antoon Kolen.
\newblock Solving covering problems and the uncapacitated plant location
  problem on trees.
\newblock {\em European Journal of Operational Research}, 12(3):266--278, 1983.

\bibitem[KP08]{TDI-Kiraly2008total}
Tam{\'a}s Kir{\'a}ly and J{\'u}lia Pap.
\newblock Total dual integrality of rothblum's description of the
  stable-marriage polyhedron.
\newblock {\em Mathematics of Operations Research}, 33(2):283--290, 2008.

\bibitem[KS20]{Laura-Sanita}
Zhuan~Khye Koh and Laura Sanit{\`a}.
\newblock An efficient characterization of submodular spanning tree games.
\newblock {\em Mathematical Programming}, 183(1):359--377, 2020.

\bibitem[Lov72]{Lovasz1972normal}
L{\'a}szl{\'o} Lov{\'a}sz.
\newblock Normal hypergraphs and the perfect graph conjecture.
\newblock {\em discrete Mathematics}, 2(3):253--267, 1972.

\bibitem[LP86]{LP.book}
L.~Lov\'{a}sz and M.D. Plummer.
\newblock {\em Matching Theory}.
\newblock North-Holland, Amsterdam--New York, 1986.

\bibitem[Mou14]{Moulin2014cooperative}
Herv{\'e} Moulin.
\newblock {\em Cooperative microeconomics: a game-theoretic introduction},
  volume 313.
\newblock Princeton University Press, 2014.

\bibitem[NZKI97]{Nagamochi1997complexity}
Hiroshi Nagamochi, Dao-Zhi Zeng, Naohi{\'s}a Kabutoya, and Toshihide Ibaraki.
\newblock Complexity of the minimum base game on matroids.
\newblock {\em Mathematics of Operations Research}, 22(1):146--164, 1997.

\bibitem[Rot92]{Rothblum1992characterization}
Uriel~G Rothblum.
\newblock Characterization of stable matchings as extreme points of a polytope.
\newblock {\em Mathematical Programming}, 54(1):57--67, 1992.

\bibitem[Sch86]{Sch-book}
A.~Schrijver.
\newblock {\em Theory of Linear and Integer Programming}.
\newblock John Wiley \& Sons, New York, NY, 1986.

\bibitem[Sha65]{Shapley1965balanced}
Lloyd~S Shapley.
\newblock On balanced sets and cores.
\newblock Technical report, RAND Corp Santa Monica, California, 1965.

\bibitem[SS71]{Shapley1971assignment}
Lloyd~S Shapley and Martin Shubik.
\newblock The assignment game {I}: The core.
\newblock {\em International Journal of Game Theory}, 1(1):111--130, 1971.

\bibitem[SZ84]{Samet-Zemel}
Dov Samet and Eitan Zemel.
\newblock On the core and dual set of linear programming games.
\newblock {\em Mathematics of Operations Research}, 9(2):309--316, 1984.

\bibitem[Vat89]{Vate1989linear}
John H~Vande Vate.
\newblock Linear programming brings marital bliss.
\newblock {\em Operations Research Letters}, 8(3):147--153, 1989.

\bibitem[Vaz22a]{Va.general}
Vijay~V Vazirani.
\newblock The general graph matching game: Approximate core.
\newblock {\em Games and Economic Behavior}, 132, 2022.

\bibitem[Vaz22b]{Va.Char}
Vijay~V Vazirani.
\newblock New characterizations of core imputations of matching and
  $b$-matching games.
\newblock In {\em Foundations of Software Technology and Theoretical Computer
  Science}, 2022.

\bibitem[Vaz23]{Va.LP}
Vijay~V Vazirani.
\newblock Lp-duality theory and the cores of games.
\newblock {\em arXiv preprint arXiv:2302.07627}, 2023.

\end{thebibliography}

\end{document}